\documentclass[a4paper]{IEEEtran}
\usepackage{amsfonts,amsmath,amsthm,amssymb,url}
\usepackage[hidelinks,bookmarks=false]{hyperref}
\hypersetup{
  pdfauthor={L.Csirmaz},
  pdftitle={Revisiting the Stability of the Ingleton Inequality: A
            Tropicalization-Free Approach},
  bookmarksopen=false,
  pdfkeywords={Shannon entropy; Ingleton inequality; entropy region; information inequality; polymatroid; Ahlswede--Korner reduction},
}

\let\>\rangle\let\<\langle
\let\norm\|
\def\|{\mskip1.5mu{|}\mskip1.5mu}
\def\H{\mathbf H}
\def\I{\mathbf I}
\def\m#1{\mskip 1.5mu{#1}\mskip 1.5mu}
\newcommand\restr{\mkern1mu{\upharpoonright}\mkern1mu}

\def\<{\langle}\def\>{\rangle}
\newcommand\GN{\Gamma_{\!N}}
\def\Ga#1{\Gamma^*_{\!#1}}
\def\barGamma{\mkern1mu\overline{\hbox{$\mkern-1mu\Gamma\mkern4mu$}}\mkern-4mu}
\def\clGa#1{\barGamma^*_{\!#1}}
\DeclareMathOperator\Prob{\mathrm{Prob}}
\newcommand\eqdef{\stackrel{\mathrm{def}}{=}}
\newcommand\dn{{\downarrow}}

\def\ing[#1,#2,#3,#4]{\mathsf{ing}(#1,#2,#3,#4\mkern 0mu)}

\newtheorem{theorem}{Theorem}
\newtheorem{lemma}[theorem]{Lemma}

\newtheorem{claim}[theorem]{Claim}
\newtheorem{construction}{Construction}
\theoremstyle{definition}

\newenvironment{Keywords}[1]{\xdef\IEEEkeywordsname{#1}\IEEEkeywords}{\endIEEEkeywords}
\title{\bfseries\fontsize{17.4}{19}\selectfont
Revisiting the Stability of the Ingleton Inequality: A Tropicalization-Free Approach
}
\author{\fontsize{12.4}{12}\selectfont
Laszlo Csirmaz\IEEEauthorrefmark1
\thanks{\IEEEauthorrefmark1e-mail:csirmaz@renyi.hu\\
\hbox to 10pt{\hfil} R\'enyi Institute, Hungary; UTIA, Prague, Czech Republic}}

\begin{document}

\maketitle

\begin{abstract}

\noindent
The classical Ingleton inequality is known to hold for entropic points under
specific exact conditional independence constraints. Recently, Matveev and
Romashchenko (2026) investigated the stability of these implications,
quantifying the extent to which the Ingleton inequality can be violated when
a group of conditional mutual information terms is small but non-zero. While
their proofs relied fundamentally on the complex framework of tropical
probability spaces, we revisit these stability results using a completely
tropicalization-free approach. By developing an alternative framework, we
significantly streamline the underlying concepts and proofs, derive explicit
error terms, and improve some estimates. Furthermore, we resolve an open
problem posed in prior work by exhibiting a new infinite family of entropy
inequalities that establishes the stability of the sum of two Ingleton
expressions.

\begin{Keywords}{Keywords}
Shannon entropy; Ingleton inequality; entropy region; 
information inequality; polymatroid; Ahlswede--K\"orner reduction.
\end{Keywords}

\begin{Keywords}{MSC classes}
94A17, 05B35, 60E15, 52A20, 90C05
\end{Keywords}
\end{abstract}

\section{Introduction}\label{sec:intro}

Ingleton's inequality plays an important role in the geometry of the
four-variable entropy region. It was originally stated in \cite{ingleton} as a
necessary condition for a matroid to be representable \cite{oxley}. The
smallest matroid violating this inequality---and also the smallest
non-representable one---is the V\'amos matroid on eight elements
\cite{oxley}. In the late 1990s, through a series of works by F.~Mat\'u\v{s} and
M.~Studen\'y \cite{Ma.Stud, mat95, mat99}, the Ingleton inequality emerged as a
crucial parameter in the geometric description of the four-variable entropy
region, $\Ga4$. For a more recent and comprehensive overview of these results,
consult \cite{MP20} and \cite{studeny21}. 

While the global structure of $\Ga4$ remains far from being fully understood 
despite considerable efforts devoted to this problem, there are significant 
achievements regarding the local behavior of the boundary of $\Ga4$ where the 
Ingleton inequality starts to fail. Results in a recent comprehensive work by 
Matveev and Romashchenko \cite{rom3} quantify the extent to which the Ingleton 
inequality is violated around these boundary points. These theorems provide 
information about how smoothly the curved part of the boundary approaches a 
given flat face. 

The work in \cite{rom3} relies fundamentally on the complex framework of
\emph{tropical probability spaces}, introduced in \cite{MP18} and further
developed in \cite{MP20a}. Prior to \cite{rom3}, this framework was applied
successfully to investigate the global structure of $\Ga4$ in \cite{MP20},
among other applications. Tropicalization is an excellent tool for investigating the
asymptotic behavior of discrete random variables, as it preserves many
subtle properties of probability spaces that are otherwise lost when working
with traditional entropy profiles.

The main objective of this work is to develop an alternative, traditional
framework. This approach, on one hand, avoids tropicalization, and, on the 
other, allows us to prove both the technical lemmas (stated for almost 
entropic vectors instead of tropical tuples) and all main theorems from \cite{rom3}. 
In addition to achieving this goal and streamlining the underlying concepts, 
our further contributions are as follows:
\begin{itemize}
\item All error terms are presented explicitly, rather than relying on asymptotic
 $O(\boldsymbol\cdot)$ notation.
\item Certain conditions are relaxed in order to strengthen the estimates.
\item An open problem posed in \cite{rom3} is resolved by exhibiting a new
infinite family of entropy inequalities.
\end{itemize}

The rest of the paper is organized as follows. Section \ref{sec:defs}
recalls the basic definitions and notations of Information Theory, including
the Copy Lemma \cite{DFZ11}. Section \ref{sec:construction} describes the
main construction and proves all properties used later. Section \ref{sec:AK}
introduces the so-called Ahlswede-K\"orner reduction \cite{AK, MMRV,
M.fmadhe}, and applies it to the main construction. Theorems estimating the
stability of the Ingleton inequality under different conditions are
presented in Section \ref{sec:stab}. Finally, Section \ref{sec:conc} concludes
the paper.


\section{Notation}\label{sec:defs}

In this paper, all sets are finite. Sets are denoted by capital letters,
such as $A$, $B$, $N$, or $T$, and elements of these sets by lowercase
letters. The union sign and curly brackets around singletons are often
omitted; thus, $ABi$ denotes the set $A\cup B\cup\{i\}$. Unless indicated
otherwise, the notation $N=ABa$ implicitly indicates that the sets on the
right-hand size are disjoint, forming a \emph{partition} or a decomposition
of the left-hand side set $N$.

The \emph{entropy profile} of jointly distributed discrete random variables
$\xi =\<\xi_i:i\in N\>$ indexed by elements of the finite \emph{base set
$N$} is the function on the non-empty subsets of $N$ whose value evaluated
on $J\subseteq N$ is the standard Shannon entropy $\H(\xi_J)$ of the
marginal distribution $\<\xi_i:i\in J\>$ (see, e.g., \cite{yeung-book}). The
collection of all entropy profiles is denoted by $\Ga N$. Elements of $\Ga
N$ are considered interchangeably as vectors indexed by subsets of $N$, as
points in the $(2^{|N|}\m-1)$-dimensional Euclidean space, and as
functions assigning non-negative real numbers to non-empty subsets of $N$.
In particular, the entropy profile of four jointly distributed random
variables is a $15$-dimensional real vector. Points (or functions) in $\Ga
N$ are called \emph{entropic}, and points (or functions) in its closure
$\clGa N$ (in the usual Euclidean sense) are \emph{almost entropic}, or
\emph{aent}. The set $\clGa N$ is a full-dimensional convex cone
\cite{yeung-book}, while internal points of $\clGa N$ are entropic
\cite{fmtwocon}; thus, $\clGa N\setminus \Ga N$ contains only boundary
points.

For a joint distribution $\xi$ over the base set $N$, the entropy
$\H(\xi_J)$ of the marginal $\xi_J$ is also denoted by $\H_\xi(J)$; the
index $\xi$ is omitted, and we write $\H(J)$, when the distribution $\xi$ is
clear from the context. The traditional information-theoretic notation is
formally extended to an arbitrary function $f$ defined on subsets of $N$ as
$$\begin{array}{l@{\:\eqdef\:}l@{}l}
  \H_f(J) & f(J), \\[2pt]
  \H_f(J\|K) & f(J\|K) &{\:\eqdef\:} f(JK)-f(K), \\[2pt]
  \I_f(J,K)  & f(J,K)  &{\:\eqdef\:} f(J)+f(K)-f(JK), \\[2pt]
  \I_f(J,K\|L) & f(J,K\|L)&{\:\eqdef\:}\\
               & \hbox to 0pt{$f(JL)+f(KL)-f(L)-f(JKL).$\hss}
\end{array}$$
Similarly to distributions, the function $f$ is omitted when it is clear and
unambiguous from the context. The non-negativity of these expressions for
entropic (and, by continuity, for almost entropic) points constitutes the
\emph{Shannon inequalities}. A function that respects the non-negativity of
these expressions is called a (rank function of a) \emph{polymatroid}. The
collection of polymatroids on (the subsets of) $N$ is denoted by $\GN$.
$\GN$ is a polyhedral cone and clearly contains $\clGa N$. A linear
inequality valid for all points of $\GN$ is called a \emph{Shannon
inequality}. Due to the strong duality of linear programming, or to the
Farkas Lemma \cite{ziegler}, checking whether an inequality is Shannon is
equivalent to checking whether an LP (linear program) instance has a
feasible solution. Automated inequality provers, such as \textsf{oXitip}
\cite{ITIP}, \textsf{Psitip} \cite{Li23}, and many others \cite{W-auto},
work seamlessly for $|N|\le 14$. For this reason, we do not
indicate why an inequality is Shannon, but rather state it as a fact.
 
The notation $(f,N)$ emphasizes that the function $f$ is defined on the
non-empty subsets of the base set $N$. By a slight abuse of notation,
$f\restr M$ for $M\subseteq N$ denotes the restriction of $f$ to the
non-empty subsets of $M$. We remark that if $(f,N)$ is a polymatroid (or
entropic, or almost entropic), then so is $f\restr M$.

The elements $a,b\in N$ are \emph{symmetric} with respect to $X\subseteq N$
and $(f,N)$, written as $aX\equiv^f bX$, if $f(aJ)=f(bJ)$ for all $J
\subseteq X\setminus\{a,b\}$; and they are \emph{symmetric}, or
\emph{$f$-symmetric}, if $aN\equiv^f bN$. We remark that if $(f,N)$ is
entropic and $a,b\in N$ are $f$-symmetric, then the underlying distribution
need not be symmetric (and cannot be assumed to be so), while the converse
implication evidently holds. In contrast, in this case theres always exists a
symmetric tropical probability distribution whose entropy profile is $f$.

\subsection{Copy Lemma}\label{subsec:copy}

Suppose $X$, $M$ and $Y$ are disjoint sets, $\xi$ is a distribution on $XM$,
and $\eta$ is a distribution on $MY$ such that the marginals $\xi_M$ and
$\eta_M$ are identical. In other words, if $\xi$ is defined on the alphabet
$\mathcal X\times\mathcal M$, and $\eta$ is defined on the alphabet
$\mathcal M\times\mathcal Y$, then, for all $\beta\in\mathcal M$, the
probability that $\xi_M$ takes the value $\beta$ is the same as the
probability that $\eta_M$ takes the value $\beta$:
$$
   \Prob(\xi_M\m=\beta)=\Prob(\eta_M\m=\beta) ~~\mbox{ for every }
   \beta\in\mathcal M.
$$
In this case, one can glue these distributions along their common marginals,
and this can be done such that $X$ and $Y$ become independent given $M$. This
simple observation is behind the most powerful tool for investigating the
entropy region. We state two lemmas that are immediate from this observation.
The first one, called the \emph{Copy Lemma}, appeared implicitly in the
original work of Zhang and Yeung \cite{ZhY.conic, ZhY.ineq}, and was later
formulated explicitly in \cite{DFZ11}. In fact, the Copy Lemma is an
immediate consequence of Lemma \ref{lemma:maxe}, since duplicating a base
element clearly preserves entropic and almost entropic points. The
construction described in the Copy Lemma is referred to as adding $r$ as a copy
of $a$ over $X$.

\begin{lemma}[Copy Lemma]\label{lemma:copy}
Suppose $N$ is partitioned as $N=aXT$, and $(f,N)$ is almost entropic.
There exists an aent function $(g,rN)$ such that
\begin{itemize}
\item[a)] $g$ is an aent extension of $f$: $g\restr N=f$;
\item[b)] $a$ and $r$ are symmetric with respect to $X$: $aX\equiv^g rX$;
\item[c)] $r$ and $aT$ are conditionally independent
given $X$:

$\I_g(r,aT\|X)=0$.
\end{itemize}
Additionally, if neither $u$ nor $v$ is in $X$ and they are $f$-symmetric,
then $g$ can be chosen such that $u$ and $v$ remain $g$-symmetric as well.
\end{lemma}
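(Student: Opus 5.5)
Since Lemma \ref{lemma:maxe} is referenced but not yet stated, I will argue directly from the gluing observation, first for entropic $f$ and then pass to limits. \textbf{Entropic case.} Let $f$ be the entropy profile of $\xi=\<\xi_i:i\in N\>$. Take the distribution $\xi_{aX}$ on $aX$ and the distribution $\xi_{XaT}=\xi$ on $XaT=N$. Their common marginal on $X$ is $\xi_X$. Rename the $a$-coordinate of the first one to $r$, and glue the two along $X$ so that $r$ and $aT$ are conditionally independent given $X$:
$$\Prob(r\m=\alpha,X\m=\beta,aT\m=\gamma)=\frac{\Prob(\xi_a\m=\alpha,\xi_X\m=\beta)\Prob(\xi_{aT}\m=\gamma,\xi_X\m=\beta)}{\Prob(\xi_X\m=\beta)}.$$
Let $g$ be the entropy profile of this joint distribution on $rN$. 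Then (a) holds because the marginal on $N$ is $\xi$. Property (b) holds because the marginal on $rX$ is an exact copy of $\xi_{aX}$, so $g(rJ)=g(aJ)$ for $J\subseteq X$. Property (c) holds by construction.

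\textbf{Almost entropic case.} Take entropic $f_n\to f$ and build $g_n$ as above. The values of $g_n$ are bounded: every $g_n(J)$ is at most $g_n(rN)=f_n(N)+f_n(a\|X)\le 2f_n(N)$. Hence a subsequence converges to some $g$, which is aent as a limit of entropic points. Properties (a), (b) and (c) are closed conditions, being linear equalities in $g$, so they pass to the limit.

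\textbf{Symmetry clause.} This is the main obstacle, since symmetry of $f$ does not come from a symmetric distribution. Two cases arise.
\begin{itemize}
\item If $u,v\in T$, or $\{u,v\}=\{a,t\}$ with $t\in T$, I would not use a single gluing. Let $\pi$ be the transposition of $u,v$ acting on $N$; it fixes $X$, and it fixes $r$ in the extension. Define $g'(J)=g(\pi J)$ for $J\subseteq rN$.
\item If $\{u,v\}=\{a,t\}$, then $\pi$ swaps $a$ and $t$, and the copy in $g'$ becomes a copy of $t$ rather than of $a$. So in this case I would instead symmetrize only over transpositions that preserve the roles.
\end{itemize}
Consider first the case $u,v\in T$. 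Because $f$ is $\pi$-invariant, $g'$ is again an aent extension of $f$. It satisfies (b) and (c) because $\pi$ fixes $a$, $X$ and $T$ setwise. Since the aent cone is convex, $\tfrac12(g+g')$ is aent. It still satisfies the linear conditions (a), (b), (c), and it is $\pi$-invariant, so $u,v$ are $g$-symmetric.

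When $a\in\{u,v\}$, say $u=a$ and $v=t$, I would instead average $g$ with the extension $g''$ obtained as follows: copy $t$ over $X$ as a new element, then apply the relabelling $a\leftrightarrow t$. The goal is to check that $g''$ still satisfies (a), (b) and (c) for the pair $(a,r)$. This check is where the $f$-symmetry of $a,t$ over all of $N$ is needed: the quantities $f(aJ)$ and $f(tJ)$ must agree for $J\subseteq X$, and likewise for $J$ containing parts of $T$. That verification is the step I expect to need the most care. Averaging then yields a $g$ that is symmetric in $u,v$, using convexity of $\clGa{rN}$ and linearity of all the required conditions.
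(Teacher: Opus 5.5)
Your entropic gluing, the compactness step (the bound $g_n(rN)=f_n(N)+f_n(a\|X)$ is right), and the averaging argument for $u,v\in T$ agree with the paper's proof. The gap is in the case $u=a$, $v=t\in T$, and it starts from a false premise. Let $\pi$ be the transposition of $a$ and $t$, and set $(\pi g)(A)=g(\pi A)$. Then $\pi g$ \emph{does} still have $r$ as a copy of $a$ over $X$, and it satisfies (a)--(c):
(a) holds because $\pi f=f$.
(b) holds because for $J\subseteq X$ we have $(\pi g)(aJ)=g(tJ)=f(tJ)=f(aJ)=g(aJ)=g(rJ)=(\pi g)(rJ)$, using (a) and (b) for $g$ and the $f$-symmetry of $a,t$.
(c) holds because $\pi$ fixes $r$ and $X$ pointwise and maps the set $aT$ onto itself, so $\I_{\pi g}(r,aT\|X)=\I_g(r,aT\|X)=0$.
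Thus $\frac12(g+\pi g)$ settles both cases at once, which is exactly what the paper does, and no case distinction is needed. Your first bullet also lists $\{u,v\}=\{a,t\}$, and the second bullet then says that case needs separate treatment.

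The substitute you propose genuinely fails. Let $g_t$ be the extension obtained by copying $t$ over $X$, and let $g''=\pi g_t$. The average $\frac12(g+g'')$ is $\pi$-invariant only if $g''=\pi g$, i.e.\ $g_t=g$. Nothing forces that: the copy depends on the conditional law of the copied variable given $X$, and $f$-symmetry does not control that law.

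Here is a counterexample with $X=\{x\}$ and $T=\{t\}$. Let $x$ be a uniform bit. Let $a$ be uniform when $x=0$ and $a=0$ when $x=1$. Let $t$ be the mirror image: $t=0$ when $x=0$ and uniform when $x=1$. Take $a,t$ conditionally independent given $x$. Then $f(a)=f(t)$ and $f(ax)=f(tx)$, so $a,t$ are $f$-symmetric.
Both $g(ar)$ and $g''(ar)=g_t(tr)$ are the entropy of a $(\tfrac58,\tfrac18,\tfrac18,\tfrac18)$ distribution, so $(g+g'')(ar)\approx 3.10$.
Both $g(tr)$ and $g''(tr)=g_t(ar)$ are the entropy of $(\tfrac12,\tfrac14,\tfrac14)$, so $(g+g'')(tr)=3$.
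Hence the average is not symmetric in $a,t$, and your closing claim that ``averaging then yields a $g$ that is symmetric in $u,v$'' is false for your construction. Replace the whole second case by the direct check of $\pi g$ above.
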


\begin{proof}
It suffices to prove the lemma for entropic functions only. Indeed, if $f_i\to
f$, where $f_i$ is entropic and $g_i$ is a corresponding extension, then
the sequence $g_i$ is bounded. Consequently, it has a convergent subsequence
whose limit satisfies the conditions for $g$.

If $(f,N)$ is the entropy profile of the distribution $\xi=\<\xi_i:i\in
N\>$, then add an identical copy $r$ of $a$ to this distribution, and apply
the observation to the distributions $\xi_{rX}$ and $\xi_{XaT}$.

To prove the last claim, let $\pi$ be the permutation that swaps $u$
and $v$ and keeps every other element fixed. Define $\pi f$ by $(\pi
f)(A)=f(\pi A)$. The assumption that $u$ and $v$ are
$f$-symmetric can be expressed equivalently as $f=\pi f$. If this is the
case and $g$ satisfies conditions a)--c), then so does $\pi g$. Finally,
let $g^*=(g+\pi g)/2$, the average of $g$ and $\pi g$. Then $g^*=\pi g^*$,
and $g^*$ also satisfies conditions a)--c), as required.
\end{proof}

\begin{lemma}\label{lemma:maxe}
Suppose $N$ is partitioned as $N=XMY$, and $(f,N)$ is almost entropic.
There exists an aent function $(g,N)$ such that
\begin{itemize}
\item[a)] $g\restr XM = f\restr XM$, and $g\restr MY=f\restr MY$; and
\item[b)] $\I_g(X,Y\|M)=0$.
\end{itemize}
\end{lemma}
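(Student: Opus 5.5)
The plan has two stages: prove Lemma \ref{lemma:maxe} first for entropic $f$, where it is the gluing observation made above, and then pass to almost entropic $f$ by compactness, exactly as in the proof of Lemma \ref{lemma:copy}.

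\textbf{Entropic case.} Let $f$ be the entropy profile of a distribution $\xi=\<\xi_i:i\in N\>$ with $N=XMY$, taking values in $\mathcal X\times\mathcal M\times\mathcal Y$. Define a new distribution $\eta$ on the same alphabet by
$$
\Prob(\eta=(x,\beta,y)) \eqdef
\frac{\Prob(\xi_{XM}=(x,\beta))\,\Prob(\xi_{MY}=(\beta,y))}{\Prob(\xi_M=\beta)}
$$
when $\Prob(\xi_M=\beta)>0$, and $0$ otherwise. This is the gluing of $\xi_{XM}$ and $\xi_{MY}$ along their common marginal $\xi_M$.

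Summing over $y$ gives $\eta_{XM}=\xi_{XM}$, and summing over $x$ gives $\eta_{MY}=\xi_{MY}$. Any marginal $\eta_J$ with $J\subseteq XM$ or $J\subseteq MY$ is a further marginal of one of these, so it has the same entropy as $\xi_J$. This gives a).

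By construction, $\eta_X$ and $\eta_Y$ are conditionally independent given $\eta_M$. Hence $\H(XMY)=\H(XM)+\H(MY)-\H(M)$ for $\eta$, which is exactly $\I_g(X,Y\|M)=0$ with $g=\H_\eta$. This gives b), and $g$ is entropic.

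\textbf{Almost entropic case.} Take entropic $f_i\to f$ and let $g_i$ be the functions just constructed for $f_i$.

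The $g_i$ are bounded. Indeed, $g_i(N)=f_i(XM)+f_i(MY)-f_i(M)$, and every value $g_i(J)$ lies between $0$ and $g_i(N)$ by monotonicity of polymatroids. The right-hand side is bounded because $f_i$ converges.

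So a subsequence of the $g_i$ converges to some $g$. This $g$ lies in $\clGa N$ because that set is closed. Condition a) holds in the limit because the values of $g_i$ on the subsets of $XM$ and of $MY$ coincide with those of $f_i$, which converge to $f$. Condition b) passes to the limit because $\I_{g}(X,Y\|M)$ is a linear, hence continuous, expression in $g$.

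\textbf{Difficulty.} There is no real obstacle. The only step needing care is the boundedness that justifies extracting the convergent subsequence, and that bound is explicit in terms of $f_i$, as shown above.
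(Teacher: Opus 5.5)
Your proposal is correct and follows the paper's proof: for entropic $f$, glue $\xi_{XM}$ and $\xi_{MY}$ along their common marginal $\xi_M$ as a conditional product; for almost entropic $f$, take a convergent subsequence of the resulting extensions. Your explicit bound $0\le g_i(J)\le f_i(XM)+f_i(MY)-f_i(M)$ supplies the boundedness that the paper, by deferring to the proof of Lemma~\ref{lemma:copy}, only asserts.
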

\begin{proof}
If $f$ is the entropy profile of $\xi=\<\xi_i:i\in N\>$, then apply the
observation to the distributions $\xi_{XM}$ and $\xi_{MY}$. For aent functions,
use a convergent subsequence as in the proof of Lemma \ref{lemma:copy}.
\end{proof}


\section{The construction}\label{sec:construction}

\begin{construction}\label{defz}
Partition the base set as $N=aMT$. For every aent function $(f,N)$ and $k\ge 0$,
there exists an aent extension $(f,z_0z_1\dots z_kN)$ such that
\begin{itemize}
\item[a)] $z_0$ is trivial: it takes a single value with probability $1$;
\item[b)] $\H(Saz_i) = \H(Sz_{i+1})$ for every $0\le i<k$ and every $S\subseteq M$;
\item[c)] $\H(z_i\|JM)=i\cdot\H(a\|M)$ for every $0\le i\le k$ and $J\subseteq aT$.
\end{itemize}
Additionally, if $a$ and $b\in T$ are $f$-symmetric, then there
is such an extension in which $a$ and $b$ remain $f$-symmetric.
\end{construction}

\begin{proof}
By induction on $k$. For $k=0$ the statement is trivial. Suppose we have
such an extension for $k$. Apply the Copy Lemma \ref{lemma:copy} to the
partition
$$
     a\cup Mz_k \cup Tz_0\dots z_{k-1}
$$
of the base set to get an extension in which $r$ is a copy of $a$ over
$Mz_k$. Then $aMz_k \equiv rMz_k$ and $\I(r,aT\|Mz_k)=0$. The new variable
$z_{k+1}$ will be the pair $rz_k$. The symmetry $aMz_k \equiv rMz_k$
immediately gives $\H(Saz_k)= \H(Srz_k)= \H(Sz_{k+1})$ for all $S\subseteq
M$, proving b).

If $J\subseteq aT$, then we have $\I(r,J\|Mz_k)=0$. Using the fact that $\H(rMz_k)=
\H(aMz_k)$ by b), this conditional independence rewrites to
\begin{align*}
  & \H(z_{k+1}\|JM) = {}\\
  & ~~~ \H(z_k\|JM) + \H(z_k\|aM) - \H(z_k\|M) + \H(a\|M).
\end{align*}
By induction, the right-hand side equals $(k+1)\cdot\H(a\|M)$, proving c).

Finally, the statement about $f$-symmetry is immediate from the $f$-symmetry 
stated in the Copy Lemma.
\end{proof}

\begin{claim}\label{claim:1}
For $S\subseteq M$, we have $\H(z_i\|S)\le i\cdot\H(a\|S)$.
\end{claim}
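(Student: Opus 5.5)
We argue by induction on $i$. For $i=0$ the variable $z_0$ is trivial, so $\H(z_0\|S)=0$ and there is nothing to prove. For the inductive step we use the recursive structure of the construction rather than property c) directly. Property c) only gives information conditioned on all of $M$, and that is too coarse for a general $S\subseteq M$. The key relation is property b) applied with $S$ in place of the subset of $M$: $\H(Saz_i)=\H(Sz_{i+1})$, and also $\H(S)=\H(S)$ trivially. Hence
$$
  \H(z_{i+1}\|S)=\H(Sz_{i+1})-\H(S)=\H(Saz_i)-\H(S)=\H(az_i\|S).
$$

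Next we bound $\H(az_i\|S)$ by subadditivity of conditional entropy, which is a Shannon inequality valid for aent points:
$$
  \H(az_i\|S)\le \H(a\|S)+\H(z_i\|S).
$$
By the induction hypothesis, $\H(z_i\|S)\le i\cdot\H(a\|S)$. Combining the two displays gives $\H(z_{i+1}\|S)\le (i+1)\cdot\H(a\|S)$, which completes the induction for every $0\le i\le k$.

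There is essentially no obstacle; the argument works verbatim for every $S\subseteq M$, including $S=\emptyset$. The one point to check is that property b) indeed allows an arbitrary subset $S\subseteq M$, and not only $S=M$; the construction states it for every $S\subseteq M$. It also holds for $S=\emptyset$ with the convention $\H(\emptyset)=0$, because the symmetry $aMz_k\equiv rMz_k$ covers all subsets of $Mz_k$.
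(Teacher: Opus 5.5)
Your proof is correct and matches the paper's argument: you rewrite $\H(z_{i+1}\|S)$ as $\H(az_i\|S)$ using property b) of Construction~\ref{defz}, bound it by $\H(z_i\|S)+\H(a\|S)$ via submodularity, and induct on $i$ starting from the trivial $z_0$. The paper does exactly this, only more tersely.
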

\begin{proof}
By submodularity, 
$$\H(z_{i+1}\|S)=\H(az_i\|S)\le \H(z_i\|S)+\H(a\|S).$$
The claim follows by induction on $i$.
\end{proof}

\begin{lemma}\label{lemma:D}
The following statements hold for every partition $(\bar S, S)$ of $M$:
\begin{itemize}
\item[a)] $\H(z_i\|M)=i\cdot\H(a\|M)$;
\item[b)] $\I(\bar S,S\|z_{i+1})=\I(\bar S,S\|az_i)$;
\item[c)] $\H(z_i\|S) -\H(z_i\|M) \le i\cdot\I(\bar S,a\|S)$.
\end{itemize}
\end{lemma}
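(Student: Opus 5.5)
Part a) is the special case $J=\emptyset$ of Construction~\ref{defz}\,c), which reads $\H(z_i\|M)=i\cdot\H(a\|M)$. Nothing further is needed.

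For b), I will expand both conditional mutual informations and compare them term by term. We have
$$\I(\bar S,S\|z_{i+1})=\H(\bar S z_{i+1})+\H(Sz_{i+1})-\H(z_{i+1})-\H(Mz_{i+1}),$$
and the same expression with $z_{i+1}$ replaced by $az_i$. Construction~\ref{defz}\,b) gives $\H(S'az_i)=\H(S'z_{i+1})$ for every $S'\subseteq M$. Applying it with $S'=\bar S$, $S$, $\emptyset$ and $M$ matches the four terms, which proves b).

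For c), I will argue by induction on $i$. The case $i=0$ is trivial because $z_0$ is constant. For the inductive step I write $\H(z_{i+1}\|S)=\H(az_i\|S)$, using Construction~\ref{defz}\,b) with $S$ and with $\emptyset$. Then I split
$$\H(az_i\|S)=\H(z_i\|S)+\H(a\|z_iS).$$
Part a) gives $\H(z_{i+1}\|M)=\H(z_i\|M)+\H(a\|M)$. Subtracting and using the induction hypothesis, it suffices to show
$$\H(a\|z_iS)-\H(a\|M)\le \I(\bar S,a\|S).$$
The right-hand side equals $\H(a\|S)-\H(a\|M)$. So the required inequality is $\H(a\|z_iS)\le\H(a\|S)$, which is monotonicity of conditioning (a Shannon inequality). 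This completes the induction.

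The only point that needs care is the inductive step of c): the quantity $\H(z_{i+1}\|S)-\H(z_{i+1}\|M)$ must be decomposed so that its increment over $\H(z_i\|S)-\H(z_i\|M)$ becomes exactly $\H(a\|z_iS)-\H(a\|M)$. For this, part a) must be used for the $M$-terms, while Construction~\ref{defz}\,b) handles the $S$-terms. After that, the bound is a single conditioning-reduces-entropy step.
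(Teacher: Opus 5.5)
Your proof is correct and is essentially the paper's. Parts a) and b) are read off from Construction~\ref{defz} exactly as you do. For c), the paper splits your induction into two steps: it first proves Claim~\ref{claim:1}, $\H(z_i\|S)\le i\cdot\H(a\|S)$, using the same step $\H(az_i\|S)\le\H(z_i\|S)+\H(a\|S)$ (equivalently $\H(a\|z_iS)\le\H(a\|S)$), and then subtracts the exact value $\H(z_i\|M)=i\cdot\H(a\|M)$ from a).
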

\begin{proof}
Property a) is immediate from claim c) of the Construction (by setting $J=\emptyset$), while property
b) follows from claim b) there. Finally, for property c), observe that
$$
   \I(\bar S,a\|S)=\H(a\|S)-\H(a\|M).
$$
From Claim \ref{claim:1}, we have $\H(z_i\|S)\le i\cdot\H(a\|S)$;
property a) gives $\H(z_i\|M)=i\cdot\H(a\|M)$; these together yield the
required inequality.
\end{proof}


\section{Ahlswede--K\"orner reduction}\label{sec:AK}

The main technical ingredient of the Ahlswede--K\"orner reduction is a
piecewise linear map that preserves almost entropic points. This statement,
presented without proof as Lemma \ref{lemma:down} below, is known as
the \emph{Ahlswede--K\"orner lemma}. It is implicit in \cite{AK}
and stated explicitly in \cite{Kaced}. Its proof is quite involved; it can
be found, e.g., in \cite[Lemma 5]{MMRV}. A more general form of this lemma
appears as Lemma 3 in \cite{entreg}, but see also Theorem 3 in \cite{fmtwocon}.

\begin{lemma}[Ahlswede--K\"orner lemma]\label{lemma:down}
Let $(f,N)$ be almost entropic, $w\in N$, and set $\alpha=f(N)-f(N\setminus w)$.
The function $f\dn^w$ defined on the subsets of the base set $N$ as
$$
    f\dn^w (A) = \begin{cases}
       f(A) & \mbox{ if $w\notin A$}, \\
       f(A)-\alpha & \mbox{ if $w\in A$},
    \end{cases}
$$
is also almost entropic.\qed
\end{lemma}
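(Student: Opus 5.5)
The plan is to prove the lemma first for entropic $f$ and then pass to almost entropic $f$ by the compactness argument used for Lemma~\ref{lemma:copy}. Two facts make the reduction work. The map $f\mapsto f\dn^w$ is continuous, because $\alpha=f(N)-f(N\setminus w)$ depends continuously on $f$. And $\clGa N$ is a closed convex cone. Hence it suffices to show the following: if $f$ is the entropy profile of $\xi=\<\xi_i:i\in N\>$, then $f\dn^w$ is a limit of points of the form $\frac1n g_n$, where each $g_n$ is a convex combination of entropic points.

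The construction uses $n$ i.i.d.\ copies $\xi^n$, whose profile is $n f$. Note that $\alpha=\H(w\|N\setminus w)$. Fix a random binning $B=B(\xi_w^n)$ of $\xi_w^n$ into about $2^{n(\alpha-\delta)}$ bins. The key step is a Slepian--Wolf / leftover-hash type estimate. Since the rate is below $\H(w\|N\setminus w)$, for a suitable (random, then fixed) choice of the binning, $B$ is almost uniform and almost independent of $\xi^n_{N\setminus w}$:
\[
  \I(B,\xi^n_{N\setminus w})=o(n)
  \quad\text{and}\quad
  \H(B)=n\alpha-o(n).
\]
In particular $\I(B,\xi^n_A)=o(n)$ for every $A\subseteq N\setminus w$. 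I expect this to be the main obstacle, since one must pass from a typicality-based statement to mutual information bounds that are sublinear in $n$.

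Granting the estimate, consider the family of conditional distributions of $\xi^n$ given $B=b$. Let $h$ be the average of their entropy profiles weighted by $\Prob(B=b)$, so that $h(A)=\H(\xi^n_A\|B)$. Then $h$ is a convex combination of entropic points, and we compute it in two cases.
\begin{itemize}
\item[(i)] If $w\notin A$, then $h(A)=\H(\xi^n_A)-\I(B,\xi^n_A)=nf(A)-o(n)$.
\item[(ii)] If $w\in A$, then $B$ is a function of $\xi^n_A$. Hence $h(A)=\H(\xi^n_A)-\H(B)=nf(A)-n\alpha+o(n)$.
\end{itemize}
Thus $\frac1n h\to f\dn^w$ as $n\to\infty$ and $\delta\to0$. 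Because $\clGa N$ is a closed convex cone containing all entropic points, $f\dn^w$ is almost entropic.

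For the almost entropic case, take entropic $f_i\to f$. Then $f_i\dn^w\to f\dn^w$ by continuity, and each $f_i\dn^w$ is almost entropic by the argument above; closedness finishes the proof. If the hashing estimate proves too delicate to do by hand, I would instead invoke the standard form from \cite[Lemma 5]{MMRV}. That result gives, for the $n$-fold product, an auxiliary variable whose conditional entropies have exactly the shifts in (i)--(ii) up to $o(n)$, and the rest of the argument is unchanged.
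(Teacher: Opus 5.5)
Your argument is correct, and it takes a different route from the paper's. The paper does not prove Lemma~\ref{lemma:down} at all: it states the lemma without proof and refers to \cite[Lemma 5]{MMRV} (also \cite{Kaced,entreg,fmtwocon}), calling the proof quite involved.

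The Ahlswede--K\"orner argument behind that citation realizes $nf\dn^w+o(n)$ as the profile of a \emph{single} tuple. In it, $\xi^n_w$ is replaced by a new variable that is almost determined by $\xi^n_{N\setminus w}$ and preserves the conditional entropies $\H(\xi^n_J\|\cdot)$ for $J\subseteq N\setminus w$. You instead \emph{condition} on a hash $B$ of $\xi^n_w$ that is nearly independent of $\xi^n_{N\setminus w}$. Convexity of $\clGa N$ then absorbs the fact that you get a mixture of entropic points rather than one distribution.

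That trade is what makes your proof short and self-contained. What it gives up is the explicit auxiliary variable, which is the object the ``AK method'' for deriving non-Shannon inequalities manipulates directly. For the statement as posed, about almost entropic points, your version is fully sufficient. For the same reason your fallback is slightly misdescribed: with the cited lemma you substitute the new variable for $\xi^n_w$ rather than conditioning on it, and no averaging is needed.

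Two small points.

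First, $B$ takes at most $2^{n(\alpha-\delta)}$ values, so the correct estimate is $\H(B)=n(\alpha-\delta)-o(n)$, not $n\alpha-o(n)$. Your iterated limit ($n\to\infty$, then $\delta\to0$, using closedness each time) already handles this.

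Second, the step you flag as the main obstacle is elementary, and only its weak $o(n)$ form is needed:
\begin{itemize}
\item[1)] Because $B$ is a function of $\xi^n_w$, you have $\H(B\|\xi^n_{N\setminus w})=n\alpha-\H(\xi^n_w\|B\,\xi^n_{N\setminus w})$.
\item[2)] On the jointly typical event, $\xi^n_w$ lies in a conditionally typical set of size at most $2^{n(\alpha+\epsilon)}$. For a uniformly random binning, the expected number of elements of that set in the bin of $\xi^n_w$ is at most $1+2^{n(\delta+\epsilon)}$.
\item[3)] Jensen's inequality then gives $\H(\xi^n_w\|B\,\xi^n_{N\setminus w})\le n(\delta+\epsilon)+o(n)$ for some fixed binning. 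The atypical event contributes only $o(n)$.
\item[4)] Combined with $\H(B)\le n(\alpha-\delta)$, this yields $\I(B,\xi^n_{N\setminus w})\le n\epsilon+o(n)$ and $\H(B)\ge n(\alpha-\delta-\epsilon)-o(n)$.
\end{itemize}
That is all (i)--(ii) require, after letting $\epsilon\to0$ as well. No leftover-hash or total-variation estimate is needed.
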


The reduction $\dn^w$ also maps polymatroids to polymatroids. This means
that in order to separate aent and non-aent points, Lemma \ref{lemma:down}
must be combined with some additional machinery.

\begin{lemma}[Ahlswede--K\"orner reduction]\label{lemma:AK}
Let $N=MXz$ be a partition, and let $(f,N)$ be aent. There exists an 
aent extension $(g,Nw)$ of $f$ such that
\begin{itemize}
\item[a)] $g\restr N = f$,
\item[b)] $\H_g(J\|w)=\H_f(J\|z)$ whenever $J\subseteq M$, and
\item[c)] $\H_g(w\|M)=0$.
\end{itemize}
Additionally, if $a,b\in X$ are $f$-symmetric, then $g$ can be chosen 
such that they remain $g$-symmetric as well.
\end{lemma}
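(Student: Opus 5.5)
Our plan is to combine the Copy Lemma~\ref{lemma:copy} with the Ahlswede--K\"orner lemma~\ref{lemma:down}. First we apply Lemma~\ref{lemma:copy} to the partition $N=z\cup M\cup X$, adding a copy $r$ of $z$ over $M$. This gives an aent extension $(h,rN)$ with $h\restr N=f$, with $zM\equiv^h rM$ (so $\H_h(rJ)=\H_f(zJ)$ for every $J\subseteq M$), and with $\I_h(r,zX\|M)=0$. The lemma's additional clause applies because $a,b\in X$ lie outside the conditioning set $M$, so $a,b$ can be kept $h$-symmetric.

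Next we reduce $r$ against $M$ only. We restrict $h$ to $Mr$, which is aent, and apply Lemma~\ref{lemma:down} there with $w=r$. This subtracts $\alpha=\H_h(r\|M)$ from every set containing $r$, and yields $\H(r\|M)=0$ on the base set $Mr$. We cannot apply the reduction on the whole $rN$, because it would subtract $\H(r\|N)$ rather than $\H(r\|M)$. So the order of operations matters: first reduce on $Mr$, then glue back.

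For the gluing we use Lemma~\ref{lemma:maxe}, or equivalently the gluing observation behind it. The reduced function $h'$ on $Mr$ agrees with $f$ on $M$, so we glue it to $(f,N)$ over the common part $M$. This gives an aent $(g,Nw)$, where $w$ plays the role of the reduced $r$, with $g\restr N=f$ and $\I_g(w,zX\|M)=0$. Property a) then holds by construction. For c), $\H_g(w\|M)=\H_{h'}(r\|M)=0$.

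For b), take $J\subseteq M$. Then $\H_g(Jw)-\H_g(w)=(\H_h(Jr)-\alpha)-(\H_h(r)-\alpha)=\H_h(J\|r)=\H_f(J\|z)$, using the symmetry $zM\equiv^h rM$. Note that the copy step is therefore only used to transfer $\H(J\|z)$ onto a variable that lives over $M$. One can even skip the copy step entirely: define $h'$ on $Mr$ directly as the marginal of $f$ on $Mz$, relabelled, which is aent.

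The main obstacle is the symmetry clause. Suppose $f$ is invariant under the swap $\pi$ of $a$ and $b$. The gluing over $M$ does not visibly respect $\pi$, but this is handled by the averaging trick from the proof of Lemma~\ref{lemma:copy}. Since $a,b\in X$ are untouched by $w$'s defining data, $\pi g$ also satisfies a)--c), because these conditions involve only $f$, $M$ and $z$, all fixed by $\pi$. Convexity of the aent cone then makes $(g+\pi g)/2$ aent, $\pi$-invariant, and still satisfying a)--c). The remaining step is to check that each condition is linear in $g$, so that it survives the averaging, and this is immediate.
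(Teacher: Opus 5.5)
\textbf{The gluing step fails.} Lemma~\ref{lemma:maxe} glues two marginals of \emph{one} aent function. The observation behind it requires identical marginal \emph{distributions} on $M$; equal entropy values on $M$ are not enough. Your $h'$ comes from applying Lemma~\ref{lemma:down} to $h\restr Mr$ alone. It is therefore just some aent function on $Mr$, with nothing tying it to a joint realization with $Xz$, and gluing it to $(f,N)$ over $M$ is unjustified.

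Such gluing is false in general. Take $(f_1,xyr)$ entropic with $r$ the extractable common information of $x,y$, i.e.\ $\H(r\|x)=\H(r\|y)=0$ and $\H(r)=\I(x,y)$. Take $(f_2,xyab)$ an Ingleton-violating entropic point with the same values on $xy$. Any aent common extension would satisfy the Shannon chain $\I(x,y)=\H(r)\le\H(r\|a)+\H(r\|b)+\I(a,b)\le\I(x,y\|a)+\I(x,y\|b)+\I(a,b)$, contradicting $\ing[x,y,a,b]<0$. Your remark that the copy step can be skipped comes from the same misunderstanding: the copy step is exactly what places $r$ jointly with $Xz$.

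\textbf{The fix.} Your objection to reducing on all of $rN$ is mistaken. The Copy Lemma gives $\I_h(r,zX\|M)=0$, hence $\H_h(r\|N)=\H_h(r\|MXz)=\H_h(r\|M)$. So apply Lemma~\ref{lemma:down} to $(h,rN)$ with element $r$, renamed $w$; the subtracted constant is exactly $\alpha=\H_h(r\|M)$. The result $h\dn^r$ is aent on the whole base set at once. It agrees with $f$ on $N$, since no subset of $N$ contains $r$, and with your $h'$ on $Mr$. Your computations for b) and c) then go through verbatim. This is the paper's proof.

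Your averaging argument for the symmetry clause is fine once some $g$ exists. Conditions a)--c) are affine in $g$, the swap $\pi$ fixes $M$, $z$ and $w$, and the aent functions on $Nw$ form a convex cone. The paper instead uses the symmetric version of the Copy Lemma and notes that $\dn^w$ preserves the symmetry of $a,b$; either route works.
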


\begin{proof}
First, extend $(f,N)$ by $w$ as a copy of $z$ over $M$. According to the
Copy Lemma, we have $wM\equiv^f zM$ and $\I_f(w,zX\|M)=0$; in
particular, $\H_f(wJ)=\H_f(zJ) \mbox{ for all } J\subseteq M$. Apply Lemma
\ref{lemma:down} to the aent function $(f,wMXz)$ and element $w$. The
constant is $\alpha=\H_f(w\|MXz)=\H_f(w\|M)$. Denoting $f\dn^w$ by $g$,
we have
$$
   g(A) = \begin{cases}
      f(A) & \mbox{ if $w\notin A$}, \\
      f(A)-\alpha & \mbox{ if $w\in A$}.
   \end{cases}
$$
Therefore, $g$ is an extension of $f$. If $J\subseteq M$, then
$g(wJ)=f(wJ)-\alpha$; thus,
\begin{align*}
   \H_g(J\|w)&=g(wJ)-g(w) = f(wJ)-f(w)={}\\
             &=f(zJ)-f(z)=\H_f(J\|z).
\end{align*}
Finally, $g(wM)=f(wM)-\alpha=f(M)=g(M)$; thus, $\H_g(w\|M)=0$.
If $a$ and $b$ are $f$-symmetric, then use the symmetry-preserving version
of the Copy Lemma. Lemma \ref{lemma:down} clearly preserves this symmetry.
\end{proof}

\begin{construction}\label{lemma:Dw}
Partition the base set as $N=aMT$. For every aent function $(f,N)$ and $k\ge
0$, there exists an aent extension $(f,w_0w_1\dots w_kN)$ such that $w_0$ is
trivial (it takes a single value with probability $1$); moreover, the
following holds for every partition $(S,\bar S)$ of $M$:
\begin{itemize}
\item[a)] $\H(w_i\|M)=0$,
\item[b)] $\I(S,\bar S\|w_{i+1}) = \I(S,\bar S\|aw_i)$,
\item[c)] $\H(w_i\|S)\le i\cdot\I(\bar S,a\|S)$.
\end{itemize}
Additionally, if $a$ and $b\in T$ are $f$-symmetric, then there exists an
extension in which they remain $f$-symmetric.
\end{construction}

\begin{proof}
Use Construction \ref{defz} to create the aent extension $(f,z_0\dots z_k N)$; 
then apply the Ahlswede--K\"orner reduction, Lemma \ref{lemma:AK}, 
successively for $i=0,1,\dots,k$ using the partition
$$
   aMT\, \cup \, \big(\{z_j:j\neq i\}\cup\{w_j:j<i\}\big)\, \cup\, \{z_i\}
$$
to add the variable $w_i$. Lemma \ref{lemma:AK} yields
$\H(J\|w_i)=\H(J\|z_i)$ for all $J\subseteq aMT$ and $\H(w_i\|aMT)=0$.
Applying this to $J=aMT$, we get
\begin{align*}
   \H(z_i)-\H(w_i)&=\H(aMTz_i)-\H(aMTw_i) ={}\\
                  &=\H(z_i\|aMT)=\H(z_i\|M)
\end{align*}
by point c) of Construction \ref{defz}. Thus,
$$
  \H(w_iJ)=\H(z_iJ)-\H(z_i\|M)
$$
whenever $J\subseteq aMT$. Plugging in these values, a) is immediate; b) and
c) follow from the corresponding points in Lemma \ref{lemma:D}.

The symmetry statement is immediate from Construction \ref{defz} and the
Ahl\-swede--K\"orner reduction.
\end{proof}


\section{Stability of the Ingleton inequality}\label{sec:stab}

We have developed the main tools that will be used to investigate the
stability of the Ingleton inequality. The first tool, Construction
\ref{lemma:Dw} from Section \ref{sec:AK}, guarantees, for any aent function,
an aent extension with specific properties. The second one, Lemma
\ref{lemma:maxe} from Section \ref{sec:defs}, states that, under certain
conditions, the validity of a specific conditional independence can be
assumed.

Ingleton's inequality plays an important role in the geometry of the
four-variable entropy region. The Ingleton expression for four variables is
defined as
$$
   \ing[x,y,a,b] \eqdef -\I(x,y)+\I(x,y\|a)+\I(x,y\|b)+\I(a,b),
$$
and has several equivalent forms (see \cite{studeny21}). The inequality
$\ing[x,y,a,b]\ge 0$ holds for representable matroids \cite{oxley}, and was
introduced by A.~W.~Ingleton specifically for this purpose \cite{ingleton}. In
the late 1990s, through a series of works by F.~Mat\'u\v{s} and M.~Studen\'y
\cite{Ma.Stud, mat95, mat99}, the Ingleton inequality emerged as a crucial
parameter in the geometric description of the four-variable entropy region
$\Ga4$. For four variables, there are six non-equivalent instances of this
inequality, and no two of them can be violated simultaneously. The region
where all six Ingleton inequalities hold forms the \emph{Ingleton part} of
$\Ga4$. This is a polyhedral cone with an explicit enumeration of all facets
and extremal rays. The remaining part splits into six disjoint, congruent
cones, depending on which Ingleton instance is violated. Each of them is
contained in a $15$-dimensional simplicial cone: the base is the hyperplane
where the corresponding Ingleton expression is zero, surrounded by $14$
hyperplanes where certain conditional independences (or functional
dependencies) hold. Investigating the stability of the Ingleton inequality
means estimating the extent to which it can be violated in the proximity of
those boundary hyperplanes. For example, Theorem \ref{thm:AB} gives a lower
bound on the Ingleton expression assuming that both $\I(x,a\|y)$ and
$\I(y,a\|x)$ are small in terms of $\I(x,y)$. This estimate provides
information on the geometry of $\Ga4$ at the intersection of the Ingleton
base and the hyperplanes corresponding to these conditional independences.
Theorem \ref{thm:C} provides a stronger (almost linear) estimate when,
additionally, the symmetrical $\I(x,b\|y)$ and $\I(y,b\|x)$ expressions are
also small. This estimate suggests a different geometrical structure at these
``corner points.''

In the rest of this section, the base set $N$ is fixed to $xyab$ or
$xyabcd$, and the subset $M\subseteq N$ is fixed to $\{x,y\}$. We also fix the
almost entropic function $(f,N)$, and take the aent extension
$(f,w_0w_1\dots w_kN)$ guaranteed by Construction \ref{lemma:Dw} from
Section \ref{sec:AK} for some integer parameter $k\ge 1$ to be determined
later. We also assume the extension to be $(a,b)$-symmetric when the
original aent function was $(a,b)$-symmetric.

First, we state and prove the analogue of Lemma E in \cite{rom3}, but
without tropicalization and with explicit error terms. Theorems
\ref{thm:AB}, \ref{thm:F} and \ref{thm:C} provide estimates under different
conditions, with explicit error terms, on the possible violations of the
Ingleton inequality.

\begin{lemma}\label{corr:E}
Suppose the integer $k>1$ is such that
$$
  \I(x,a\|y)+\I(y,a\|x)\le (1/k^2)\,\I(x,y).
$$
Then for some $i<k$ we have
\begin{itemize}
\item[a)] $\H(w_i\|x)+\H(w_i\|y) \le (1/k)\,\I(x,y)$, and
\item[b)] $\I(xy,a\|w_i)\le (2/k)\,\I(x,y)$.
\end{itemize}
\end{lemma}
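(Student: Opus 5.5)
The plan is to show that a) holds for \emph{every} $i<k$, and then to pick the index for b) by a telescoping (pigeonhole) argument. Throughout, $M=\{x,y\}$, and $(f,w_0\dots w_kN)$ is the extension given by Construction~\ref{lemma:Dw}.

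\emph{Part a).} Apply property c) of Construction~\ref{lemma:Dw} to the two partitions of $M$. With $S=\{x\}$, $\bar S=\{y\}$ it gives $\H(w_i\|x)\le i\cdot\I(y,a\|x)$. With $S=\{y\}$, $\bar S=\{x\}$ it gives $\H(w_i\|y)\le i\cdot\I(x,a\|y)$. Adding these, using $i<k$ and the hypothesis, we get
$$\H(w_i\|x)+\H(w_i\|y)\le k\cdot(1/k^2)\,\I(x,y)=(1/k)\,\I(x,y)$$
for every $i<k$. So a) imposes no restriction on the choice of $i$.

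\emph{Part b).} Write $w$ for $w_i$. By the chain rule, $\I(xy,a\|w)=\I(x,a\|w)+\I(y,a\|xw)$. A direct Shannon identity gives
$$\I(x,y\|w)-\I(x,y\|aw)=\I(x,a\|w)-\I(x,a\|yw).$$
Combining the two, we obtain
$$\I(xy,a\|w_i)=D_i+\I(x,a\|yw_i)+\I(y,a\|xw_i),$$
where $D_i=\I(x,y\|w_i)-\I(x,y\|aw_i)$.

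By property b) of Construction~\ref{lemma:Dw}, $\I(x,y\|aw_i)=\I(x,y\|w_{i+1})$, so $D_i=\I(x,y\|w_i)-\I(x,y\|w_{i+1})$. Each $D_i\ge0$, since it equals $\I(xy,a\|w_i)-\I(x,a\|yw_i)-\I(y,a\|xw_i)$ and, by the chain rule, $\I(xy,a\|w_i)=\I(y,a\|w_i)+\I(x,a\|yw_i)$. The sum $\sum_{i<k}D_i$ telescopes to $\I(x,y\|w_0)-\I(x,y\|w_k)\le\I(x,y)$, because $w_0$ is trivial. Hence some $i<k$ has $D_i\le(1/k)\,\I(x,y)$.

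The remaining error terms are handled by property a) of Construction~\ref{lemma:Dw}, $\H(w_i\|xy)=0$. This gives $\I(x,a\|y)=\I(xw_i,a\|y)=\I(w_i,a\|y)+\I(x,a\|yw_i)\ge\I(x,a\|yw_i)$, and symmetrically $\I(y,a\|xw_i)\le\I(y,a\|x)$. Therefore, for the chosen $i$,
$$\I(xy,a\|w_i)\le (1/k)\,\I(x,y)+(1/k^2)\,\I(x,y)\le(2/k)\,\I(x,y).$$
Together with part a) holding for all $i<k$, this proves the lemma.

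The main obstacle is finding the right decomposition of $\I(xy,a\|w_i)$ into a telescoping term plus terms controlled by the hypothesis. I expect the identity above, together with the functional dependence of $w_i$ on $xy$ to absorb the conditioning on $w_i$, to do this. The identity and the nonnegativity of $D_i$ are routine Shannon checks.
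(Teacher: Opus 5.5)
Your proof is correct. Its skeleton is the paper's: a) from property c), then averaging over $\delta_i=\I(x,y\|w_i)-\I(x,y\|w_{i+1})$ using property b). The difference is in how you control the residual terms.

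The paper invokes a Shannon inequality valid for every polymatroid,
$\I(xy,a\|w)\le \I(x,a\|y)+\I(y,a\|x)+\H(w\|x)+\H(w\|y)+\I(x,y\|w)-\I(x,y\|aw)$.
It then bounds $\H(w_i\|x)+\H(w_i\|y)$ by property c), which gives $(i+1)\big(\I(x,a\|y)+\I(y,a\|x)\big)+\delta_i$. This is why it needs $i+1\le k$ and the $1/k^2$ hypothesis already in b).

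You instead write the exact identity $\I(xy,a\|w)=\delta+\I(x,a\|yw)+\I(y,a\|xw)$. You then absorb the conditioning on $w_i$ through the functional dependence $\H(w_i\|xy)=0$ (property a)), so the residual is a single copy of $\I(x,a\|y)+\I(y,a\|x)$. This buys the sharper bound $\I(xy,a\|w_i)\le(1/k+1/k^2)\,\I(x,y)$. It also shows that b) alone would follow from $\I(x,a\|y)+\I(y,a\|x)\le(1/k)\,\I(x,y)$; the $1/k^2$ assumption is really needed only for a). Your argument also shows that the terms $\H(w\|x)+\H(w\|y)$ in the paper's inequality could be replaced by $2\H(w\|xy)$, which vanishes here.

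You should delete one sentence: the claim that each $D_i\ge0$ is false. Your own computation reduces it to $D_i=\I(y,a\|w_i)-\I(y,a\|xw_i)$, a conditional interaction information with no definite sign. For a counterexample, take $x=(u,n_1)$ and $y=(u,n_2)$, where $u$ has entropy $L$, $n_1,n_2$ are independent fair bits, and $a=n_1\oplus n_2$. Then $\I(x,a\|y)=\I(y,a\|x)=1$, so the hypothesis holds once $L\ge 2k^2$. Yet $D_0=\I(x,y)-\I(x,y\|a)=L-(L+1)=-1$.

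Fortunately you never use nonnegativity. The bound $\sum_{i<k}D_i\le\I(x,y)$ alone forces $\min_{i<k}D_i\le(1/k)\,\I(x,y)$, and your final estimate uses only this upper bound on the chosen $D_i$.
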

\begin{proof}
a) is immediate from point c) of Construction \ref{lemma:Dw}. To prove b), we start
from the four-variable Shannon inequality
\begin{align*}
  \I(xy,a\|w) &\le \I(x,a\|y)+\I(y,a\|x)+\H(w\|x)+\H(w\|y) + {}\\
              & ~~~~~~{} + \I(x,y\|w)-\I(x,y\|wa).
\end{align*}
According to Construction \ref{lemma:Dw}, this implies
$$
   \I(xy,a\|w_i)\le (i+1)\big(\I(x,a\|y)+\I(y,a\|x)\big ) +
   \delta_i,
$$
where $\delta_i=\I(x,y\|w_i)-\I(x,y\|w_{i+1})$. Since
$$
    \delta_0+\delta_1+\cdots+\delta_{k-1}\le 
    \I(x,y\|w_0)=\I(x,y),
$$
there is an $i< k$ such that $\delta_i\le (1/k)\I(x,y)$. Then $i+1\le
k$; thus,
\begin{align*}
  \I(xy,a\|w_i) &\le k\big(\I(x,a\|y)+\I(y,a\|x)\big ) +
        \delta_i \le{} \\
   &\le (1/k)\I(x,y) + (1/k)\I(x,y),
\end{align*}
proving b).
\end{proof}

\begin{theorem}\label{thm:AB}
Suppose $k>1$ is an integer such that
$$
   \I(x,a\|y)+\I(y,a\|x) \le (1/k^2)\I(x,y).
$$
Then both $\ing[x,y,a,b]$ and $\ing[x,a,y,b]$ are at least

$-(3/k)\I(x,y)$.
\end{theorem}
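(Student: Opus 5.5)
Apply Lemma \ref{corr:E} to fix an index $i<k$ and write $w=w_i$. This $w$ satisfies $\H(w\|x)+\H(w\|y)\le (1/k)\I(x,y)$ and $\I(xy,a\|w)\le(2/k)\I(x,y)$. The intuition is that $w$ approximates a ``common information'' of $x$ and $y$. Indeed, $\H(w\|xy)=0$ by point a) of Construction \ref{lemma:Dw}. Also, $x$ and $y$ determine $w$ up to a small error, and $a$ is nearly independent of $xy$ given $w$. In the exact case, where $w$ is a common information with $a\perp xy\mid w$, the Ingleton inequality is known to follow from Shannon inequalities. So the plan is to find a Shannon inequality on the five variables $xyabw$ of the form
\begin{align*}
 \ing[x,y,a,b] \ge{}& -\big(\H(w\|x)+\H(w\|y)\big) \\
 &{}-\I(xy,a\|w) - c\,\H(w\|xy),
\end{align*}
with the coefficients on the first two error terms equal to $1$. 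Since $\H(w\|xy)=0$, plugging in the bounds from Lemma \ref{corr:E} then gives $-(1/k+2/k)\I(x,y)=-(3/k)\I(x,y)$.

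To guess this inequality I would follow the standard common-information proof of Ingleton. When $w$ is a function of both $x$ and $y$, one has $\I(x,y\|w)\le\I(x,y\|a)+\I(x,y\|b)+\I(a,b)-\I(x,y)+\text{(terms in }\I(\cdot,\cdot)\text{ with }w)$. In the exact case $\I(x,y)-\H(w)=\I(x,y\|w)$, and the chain needs $\H(w)\le \I(a,b)+\cdots$, which in turn needs $\H(w\|a)\le \I(x,y\|a)$, $\H(w\|b)\le\I(x,y\|b)$, and $\H(w)\le \H(w\|a)+\H(w\|b)+\I(a,b)$. The last of these is Shannon: $\I(a,b)\ge \I(a,b)-\I(a,b\|w)\ge\ldots$. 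Each step becomes approximate when $w$ is not exactly common. The slack $\H(w\|x)+\H(w\|y)$ enters once, from bounds like $\H(w\|a)\le \I(x,y\|a)+\H(w\|x)+\H(w\|y)$. The dependence of $a$ on $xy$ given $w$ enters through $\I(xy,a\|w)$.

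For the second expression $\ing[x,a,y,b]=-\I(x,a)+\I(x,a\|y)+\I(x,a\|b)+\I(y,b)$, I would aim for an analogous five-variable Shannon inequality with the same error terms. Here $a$ being nearly independent of $xy$ given the near-common part $w$ is what makes $\I(x,a)\approx\I(w,a)$ controllable. The additive term $\I(x,a\|y)\ge 0$ can simply be dropped, or used as well, since it is at most $(1/k^2)\I(x,y)$.

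The main obstacle is pinning down the exact five-variable Shannon inequalities with coefficient exactly $1$ on both error terms, so that the constant comes out as $3/k$. I would first try the combination $\H(w\|x)+\H(w\|y)+\I(xy,a\|w)$ directly, verify candidates with an LP prover as the paper suggests, and adjust which slack terms appear (possibly adding $\I(x,a\|y)$-type terms, already of order $1/k^2$) until the LP is feasible.
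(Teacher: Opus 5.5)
Your plan is essentially the paper's proof: take $w=w_i$ from Lemma~\ref{corr:E} and use the five-variable Shannon inequalities $\ing[x,y,a,b]\ge -\I(xy,a\|w)-\H(w\|x)-\H(w\|y)$ and the same lower bound for $\ing[x,a,y,b]$; the paper states both as Shannon facts without derivation, just as you propose to verify them by LP. Both inequalities already hold with $c=0$, so neither the $\H(w\|xy)$ term nor any extra $\I(x,a\|y)$-type slack is needed, and the constant comes out as exactly $3/k$.
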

\begin{proof}
These are Shannon inequalities:
\begin{align*}
   \ing[x,y,a,b] &\ge -\I(xy,a\|w)-\H(w\|x)-\H(w\|y), \\
   \ing[x,a,y,b] &\ge -\I(xy,a\|w)-\H(w\|x)-\H(w\|y).
\end{align*}
Apply Lemma \ref{corr:E}.
\end{proof}

\begin{theorem}\label{thm:F}
Suppose $k>1$ is an integer such that
\begin{align*}
  \I(x,a\|y)+\I(y,a\|x) &\le (1/k^2)\,\I(x,y), \mbox{ and }\\
  \I(x,b\|y)+\I(y,b\|x) &\le (1/k)\,\I(x,y).
\end{align*}
Then
\begin{equation}\label{eq:F}
   \ing[x,y,a,b]+\ing[x,y,c,d]+ (5/k)\,\I(x,y) \ge 0.
\end{equation}
\end{theorem}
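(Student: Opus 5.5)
The plan follows the pattern of Theorem \ref{thm:AB}. The base set is now $N=xyabcd$ with $M=\{x,y\}$, so Construction \ref{lemma:Dw} applies to the partition $N=aMT$ with $T=bcd$. Lemma \ref{corr:E} then gives an index $i<k$ such that $w=w_i$ satisfies
$$
\H(w\|x)+\H(w\|y)\le (1/k)\I(x,y), \quad \I(xy,a\|w)\le(2/k)\I(x,y).
$$
Point a) of Construction \ref{lemma:Dw} also gives $\H(w\|xy)=0$. The idea is that $w$ acts as an approximate common information of $x$ and $y$ that is nearly independent of $a$. The second hypothesis, on $b$, has to enter directly through a Shannon inequality, because no auxiliary variable was built for $b$. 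This explains why it only needs the weaker $1/k$ bound.

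The core step is to find a Shannon inequality in the variables $x,y,a,b,c,d,w$ of the form
\begin{align*}
\ing[x,y,a,b]+\ing[x,y,c,d] &\ge -\I(xy,a\|w)-\alpha\big(\H(w\|x)+\H(w\|y)\big)\\
&\quad{}-\beta\big(\I(x,b\|y)+\I(y,b\|x)\big),
\end{align*}
with $\alpha=\beta=1$. Then the bounds above give $2/k+1/k+1/k=4/k\le 5/k$, which leaves slack for a slightly worse combination. Candidate inequalities would be found and verified with an automated prover such as \textsf{oXitip}: this is a $7$-variable problem, so it is within reach. I would set it up as an LP, minimizing the coefficients on the right-hand side and treating the term $\H(w\|xy)$ as free (equal to $0$).

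To guess the structure, I would first look at the exact case. There $w$ is common information of $x,y$ with $\I(x,y\|w)$ absorbed, $a$ is independent of $xy$ given $w$, and $b$ satisfies $\I(x,b\|y)=\I(y,b\|x)=0$. The task is to show that the sum of the two Ingleton expressions is nonnegative, where $\ing[x,y,c,d]$ alone may be negative. The mechanism I expect is to bound $-\I(x,y)$ in the first expression by $-\H(w)$ up to errors. The surplus $\I(x,y\|a)+\I(a,b)+\I(x,y\|b)-\I(x,y)$ then dominates a deficit in the second expression through terms like $\I(x,y\|w)$, and the $b$-conditions make $\I(x,y\|b)$ comparable to $\I(x,y)-\H(b)$-type quantities. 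If no single inequality with the small coefficients exists, I would allow terms $\I(x,y\|w_i)-\I(x,y\|w_{i+1})$ as in the proof of Lemma \ref{corr:E}. Re-selecting $i$ to make both $\delta_i$ and the relevant quantities small costs at most an extra $1/k$, which the constant $5$ accommodates.

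The main obstacle is discovering the right Shannon inequality, since the argument reduces entirely to it and to bookkeeping. In particular the $b$-term must enter with coefficient $1$, while the $a$-terms are handled via $w$.
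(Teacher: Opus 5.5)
There is a genuine gap. The Shannon inequality you plan to find does not exist for any choice of coefficients, so the LP search cannot succeed.

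To see this, take the rank-$4$ sparse paving matroid on the ten points $x_1,x_2,y_1,y_2,b_1,b_2,c_1,c_2,d_1,d_2$ whose circuit-hyperplanes are the seven $4$-sets $xy$, $xc$, $xd$, $yc$, $yd$, $xb$, $yb$. Here each letter stands for its pair of points. Any two of these sets share at most two points, so this is indeed a matroid. Group the points into blocks $x,y,b,c,d$, and add $a$ and $w$ as trivial (rank $0$) elements.

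On $xycd$ this is the V\'amos polymatroid, so $\I(x,y)=1$ and $\ing[x,y,c,d]=-1$. Moreover $f(b)=2$, $f(xb)=f(yb)=3$ and $f(xyb)=4$. Hence $\I(x,b\|y)=\I(y,b\|x)=\I(x,y\|b)=0$ and $\ing[x,y,a,b]=0$.

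So the hypotheses hold for every $k$, and all your error terms $\I(xy,a\|w)$, $\H(w\|x)$, $\H(w\|y)$, $\H(w\|xy)$ and $\I(x,b\|y)+\I(y,b\|x)$ vanish. Yet $\ing[x,y,a,b]+\ing[x,y,c,d]=-1$.

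Your fallback fails on the same example. With all $w_i$ trivial, this polymatroid satisfies every property of Construction~\ref{lemma:Dw} and the conclusions of Lemma~\ref{corr:E}, and every $\delta_i$ is $0$. So \emph{any} argument of the form ``Shannon inequality plus the listed properties of the $w_i$'' is refuted. Your exact-case heuristic is not a Shannon consequence either, for the same reason.

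What is missing is a second non-Shannon step that couples $cd$ with $ab$. Every quantity in the hypotheses and in (\ref{eq:F}) lives on $xyab$ or on $xycd$. So by Lemma~\ref{lemma:maxe} you may replace $f$ by an aent function with the same restrictions to these sets and with $\I(ab,cd\|xy)=0$. In the example above, $\I(ab,cd\|xy)=1$.

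With this term available, the paper uses the Shannon inequality
\begin{align*}
 \ing[x,y,a,b]+\ing[x,y,c,d] &\ge -3\,\I(ab,cd\|xy)-\I(xy,a\|w)\\
 &\quad{}-2\big(\H(w\|x)+\H(w\|y)\big)-\big(\I(x,b\|y)+\I(y,b\|x)\big).
\end{align*}
Lemma~\ref{corr:E} then gives $2/k+2\cdot(1/k)+1/k=5/k$. Note the coefficient $2$ on $\H(w\|x)+\H(w\|y)$: your guess $\alpha=1$ and the resulting $4/k$ had no justification.
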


An immediate consequence is that under the conditions of the theorem, the
following alternative holds for any real number $t$:
$$\begin{array}{rl}
   \mbox{\emph{either}} & \ing[x,y,a,b] + (3/k-t)\I(x,y)\ge 0,\\[2pt]
   \mbox{\emph{or} }    & \ing[x,y,c,d] + (2/k+t)\I(x,y)\ge 0.
\end{array}$$
Note that the assumptions do not involve the variables $c$ and $d$, so the
conclusion holds for any $c$ and $d$ jointly distributed with $xyab$.

\begin{proof}
It suffices to prove (\ref{eq:F}) only, as then the two statements
in the last claim cannot fail simultaneously. We use the following Shannon
inequality:
\begin{align*}
  & \ing[x,y,a,b] + \ing[x,y,c,d] \ge -3\,\I(ab,cd\|xy) - {} \\
  & \mkern 40mu {}-\I(xy,a\|w)-2\big(\H(w\|x)+\H(w\|y)\big) - {} \\
  & \mkern 40mu {}-\big(\I(x,b\|y)+\I(y,b\|x)\big).
\end{align*}
Using Lemma \ref{lemma:maxe} from Section \ref{sec:defs}, $ab$ and $cd$ can be
assumed to be independent given $xy$ as no term in (\ref{eq:F}) mixes $ab$
and $cd$. Take the extension as in Lemma \ref{corr:E}. Using the first line
of the assumption we get
\begin{align*}
\H(w\|x)+\H(w\|y)&\le (1/k)\,\I(x,y), ~~\mbox{ and }\\
       \I(xy,a\|w)&\le (2/k)\,\I(x,y).
\end{align*}
Combining these with the second line of the assumption proves (\ref{eq:F}).
\end{proof}

\begin{theorem}\label{thm:C}
Suppose $k>1$ is an integer such that both
\begin{align*}
  \I(x,a\|y)+\I(y,a\|x) &\le (1/2^k)\,\I(x,y), \mbox{ and}\\
  \I(x,b\|y)+\I(y,b\|x) &\le (1/2^k)\,\I(x,y).
\end{align*}
Then $\ing[x,y,a,b]\ge -(3k/2^k)\,\I(x,y)$.
\end{theorem}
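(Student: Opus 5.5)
The proof will go by induction on $k$, using Construction \ref{lemma:Dw} once at every step together with a single Shannon inequality whose error terms shrink geometrically. The target bound $3k/2^k$ suggests this shape: after $j$ steps of a recursion the relevant quantity is scaled by $2^{-j}$, and each step costs an additive error of about $3\cdot 2^{-k}\,\I(x,y)$. Summing $k$ such costs gives $3k/2^k$. Write $\varepsilon_a=\I(x,a\|y)+\I(y,a\|x)$ and $\varepsilon_b=\I(x,b\|y)+\I(y,b\|x)$. By symmetrizing first (averaging $f$ with its $a\leftrightarrow b$ swap, which leaves the Ingleton expression and both hypotheses invariant), I would assume $a$ and $b$ are $f$-symmetric. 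Then the symmetric version of Construction \ref{lemma:Dw} applies, and $\varepsilon_a=\varepsilon_b=:\varepsilon\le 2^{-k}\I(x,y)$.

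The first step is a one-step reduction. Take the extension $w_0,w_1$ of Construction \ref{lemma:Dw} with parameter $1$ for the element $a$ (with $b\in T$). It gives:
\begin{itemize}
\item[(i)] $\H(w_1\|x)+\H(w_1\|y)\le\varepsilon$;
\item[(ii)] $\I(x,y\|w_1)=\I(x,y\|a)$;
\item[(iii)] $\H(w_1\|xy)=0$.
\end{itemize}
By symmetry, the analogous statements hold for a $b$-based copy. The key ingredient is a Shannon inequality on $xyabw$ (to be found with an automated prover on at most six variables) of the shape
$$
\ing[x,y,a,b]\;\ge\;\tfrac12\,\Phi(x,y,a,b,w)\;-\;c_1\varepsilon_a-c_2\varepsilon_b-c_3\big(\H(w\|x)+\H(w\|y)\big).
$$
Here $\Phi$ should itself be an Ingleton-type expression that again satisfies the hypotheses of the theorem, with ``$\I(x,y)$'' replaced by a quantity at most $\I(x,y)$ and the smallness parameters at most doubled relative to it. For example, $\Phi$ could be the Ingleton expression of the four variables $x,y,a,b$ conditioned on $w$ (all with mutual informations replaced by their conditional versions), or the Ingleton expression with $a$ replaced by the pair $aw$. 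Whichever form $\Phi$ takes, the constants $c_1,c_2,c_3$ must add up to the per-step loss of $3\cdot2^{-k}$ (in units of $\I(x,y)$) used below.

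Next I would iterate. The conditional (or substituted) configuration is again almost entropic, so applying the same step to it $k$ times produces a chain
$$
\ing[x,y,a,b]\ge 2^{-j}\Phi_j-\sum_{i<j}2^{-i}\cdot(\text{error}_i),
$$
where the ratio $\varepsilon/\I(x,y)$ at level $i$ has grown by at most $2^i$. Each weighted error $2^{-i}\cdot\text{error}_i$ is therefore at most $3\cdot 2^{-k}\I(x,y)$, which is what produces the factor $k$. At the last level I would bound $2^{-k}\Phi_k$ crudely from below by $-2^{-k}\cdot O(\I(x,y))$, for instance via Theorem \ref{thm:AB} or simply via $\Phi_k\ge-\I(\cdot,\cdot)$ of the current pair. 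I would then check that this crude final term can be absorbed into the constant $3$.

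I expect the main obstacle to be finding the exact Shannon inequality in the one-step reduction: the coefficient in front of $\Phi$ must be exactly $1/2$, and the hypotheses must reproduce themselves for the new configuration with only a factor-$2$ deterioration. I would first try $\Phi$ equal to the Ingleton expression conditioned on $w$, using (ii) to control $\I(x,y\|w)$. If that fails, I would try combining the $a$-copy and the $b$-copy into a single six-variable Shannon inequality. If a factor-$1/2$ contraction does not appear there either, the fallback is to use Lemma \ref{corr:E}-style averaging over a chain $w_0,\dots,w_k$, chosen with telescoping terms whose sum is geometric rather than uniform.
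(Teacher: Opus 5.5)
There is a genuine gap, and it sits at the step you yourself flag as the main obstacle. The one-step inequality $\ing[x,y,a,b]\ge\tfrac12\Phi-c_1\varepsilon_a-c_2\varepsilon_b-c_3\big(\H(w\|x)+\H(w\|y)\big)$ cannot be a Shannon inequality for either of your candidates (the Ingleton expression conditioned on $w$, or $\ing[x,y,aw,b]$).

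A Shannon inequality must hold for every polymatroid on $xyabw$, including one where $w$ is a loop, i.e.\ $\H(wJ)=\H(J)$. Then $\H(w\|x)=\H(w\|y)=0$, and both candidates collapse to $\ing[x,y,a,b]$. Now take the standard Ingleton-violating polymatroid: rank $2$ on singletons, rank $4$ on $ab$, rank $3$ on the other five pairs, and rank $4$ on all triples and on $xyab$. It satisfies $\varepsilon_a=\varepsilon_b=0$ and $\ing[x,y,a,b]=-1$, so your inequality would read $-1\ge-\tfrac12$. Every Ingleton instance on $x,y,a,b$ is at least $-1$ at this point, so the same failure occurs for any $\Phi$ that reduces to an Ingleton expression when $w$ is trivial.

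Put differently, a factor-$\tfrac12$ recursion on the Ingleton expression itself would give a Shannon proof of the linear bound $\ing[x,y,a,b]\ge-2(c_1\varepsilon_a+c_2\varepsilon_b)$, which this polymatroid refutes. Separately, your claim that the smallness ratio relative to $\I(x,y\|w)$ at most doubles per level would need a lower bound on $\I(x,y\|w)$ that nothing supplies.

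The halving in the paper acts on $\I(x,y\|w_i)$, not on the Ingleton expression. The key Shannon inequality is
\[
\ing[x,y,a,b]\ge\I(x,y\|aw)+\I(x,y\|bw)-\I(x,y\|w)-2\big(\H(w\|x)+\H(w\|y)\big).
\]
Its right side equals $-1$ at the loop example, so it passes the test above.

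After your symmetrization, a single $a$-based chain $w_0,\dots,w_k$ from Construction \ref{lemma:Dw} keeps $a$ and $b$ symmetric. Point b) then gives $\I(x,y\|aw_i)=\I(x,y\|bw_i)=\I(x,y\|w_{i+1})$. So at $w=w_i$ the right side becomes $-\delta_i-2\big(\H(w_i\|x)+\H(w_i\|y)\big)$, where $\delta_i=\I(x,y\|w_i)-2\I(x,y\|w_{i+1})$.

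The weighted sum $\sum_i 2^i\delta_i$ telescopes to at most $\I(x,y)$, so some $\delta_i$ is at most $(k/2^k)\I(x,y)$. Meanwhile point c) bounds $\H(w_i\|x)+\H(w_i\|y)\le i\,\varepsilon_a\le(k/2^k)\I(x,y)$ for every $i\le k$ simultaneously. No reproduction of the hypotheses under conditioning is needed.

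Your closing fallback sentence points in this direction. However, the inequality above, the symmetry identity that produces the factor $2$, and the choice of $\delta_i$ are the actual content of the proof, and the proposal supplies none of them.
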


Observe that, under the same assumptions, Theorem \ref{thm:AB} gives the
significantly weaker estimate $\ing[x,y,a,b]\ge -(3/2^{k/2})\,\I(x,y)$.

\begin{proof}
The main ingredient is the Shannon inequality
\begin{align}\label{eq:C}
\ing[x,y,a,b] &\ge  \I(x,y\|aw)+\I(x,y\|bw)-\I(x,y\|w) - {} \\
     &~~~~~~~  - 2\big(\H(w\|x)+\H(w\|y)\big)\nonumber.
\end{align}
Let $\pi$ be the permutation that swaps $a$ and $b$ and leaves other
elements fixed. Clearly, if $f$ is an aent counterexample to the claim, then
so is $\pi f$ (as both the assumption and the conclusion are invariant for
the symmetry $\pi$). Consequently, the symmetric $(f+\pi f)/2$ is also a
counterexample. Thus, we may assume, without loss of generality, that $f$ is
symmetric for the permutation $\pi$. This symmetry extends to the base set
$xyabw_1\dots w_k$ as guaranteed by Construction \ref{lemma:Dw}.
Therefore, we have $\I(x,y\|aw_i)= \I(x,y\|bw_i)$ among other equalities. Define
$$
   \delta_i=\I(x,y\|w_i)- 2\I(x,y\|w_{i+1}).
$$
Observe that
$$
   \delta_0+2\delta_1+4\delta_2+\cdots+2^k\delta_k \le \I(x,y\|w_0)=\I(x,y);
$$
this clearly implies $\delta_i<(k/2^k)\,\I(x,y)$ for some $i\le k$.
By symmetry and by point (b) of Lemma \ref{lemma:D} we have
$$
   \delta_i = \I(x,y\|w_i)-\I(x,y\|aw_i)-\I(x,y\|bw_i).
$$
Point c) of Construction \ref{lemma:Dw} gives that for $i\le k$ we have
\begin{align*}
  \H(w_i\|x)+\H(w_i\|y) &\le i\,\big(\I(x,a\|y)+\I(y,a\|x)\big) \le {}\\
   &\le (k/2^k)\,\I(x,y).
\end{align*}
Plugging these estimates into (\ref{eq:C}), we get
$$
   \ing[x,y,a,b] \ge -\delta_i - (2k/2^k)\,\I(x,y).
$$
Since $\delta_i<(k/2^k)\,\I(x,y)$, we are done.
\end{proof}

The paper \cite{rom3} of Matveev and Romashchenko remarks that Theorems
\ref{thm:AB} and \ref{thm:C} are consequences of two infinite sequences of
entropy inequalities from \cite{M.infinf} and \cite{DFZ11}, and raises the
question whether such a sequence also exists for Theorem \ref{thm:F}. We
answer this question in the affirmative. Theorem \ref{thm:neweq} provides
such a family. Indeed, plugging in the conditions of Theorem \ref{thm:F}
into the inequality (\ref{eq:new}) gives the slightly stronger bound
$$
   \ing[x,y,a,b]+\ing[x,y,c,d]+\Big(\frac1k+\frac1k+\frac{k-1}{k^2}
   \Big)\I(x,y) \ge 0.
$$

\begin{theorem}\label{thm:neweq}
The following is an entropy inequality for any integer $k\ge 1$:
\begin{align}\label{eq:new}
  &\ing[x,y,a,b]+\ing[x,y,c,d]+\I(x,b\|y)+\I(y,b\|x) +{} \\
  & ~~~ + (1/k)\I(x,y)+(k-1)\big(\I(x,a\|y)+\I(y,a\|x)\big) \ge 0.
  \nonumber
\end{align}
\end{theorem}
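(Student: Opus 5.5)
We would prove (\ref{eq:new}) by rerunning the proof of Theorem \ref{thm:F} with the averaging over $i$ kept exact instead of packaged into Lemma \ref{corr:E}. Both sides of (\ref{eq:new}) are linear and continuous, so it suffices to verify it on almost entropic points. Fix an aent $(f,xyabcd)$. By Lemma \ref{lemma:maxe} we may assume $\I(ab,cd\|xy)=0$, since no term of (\ref{eq:new}) mixes $ab$ with $cd$. Then take the extension $(f,w_0\dots w_kN)$ from Construction \ref{lemma:Dw} with $M=xy$. Write $\varepsilon_a=\I(x,a\|y)+\I(y,a\|x)$ and $\varepsilon_b=\I(x,b\|y)+\I(y,b\|x)$.

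Start from the Shannon inequality used for Theorem \ref{thm:F}, with the term $\I(ab,cd\|xy)$ now zero:
$$\ing[x,y,a,b]+\ing[x,y,c,d]\ge -\I(xy,a\|w)-2\big(\H(w\|x)+\H(w\|y)\big)-\varepsilon_b.$$
Substitute $w=w_i$ and bound the right-hand side using the four-variable Shannon inequality from the proof of Lemma \ref{corr:E}. That inequality, together with points b) and c) of Construction \ref{lemma:Dw}, gives
$$\I(xy,a\|w_i)\le \varepsilon_a+\H(w_i\|x)+\H(w_i\|y)+\delta_i,\qquad \delta_i=\I(x,y\|w_i)-\I(x,y\|w_{i+1}).$$
Point c) also gives $\H(w_i\|x)+\H(w_i\|y)\le i\varepsilon_a$. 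Combining, we obtain, for each $0\le i<k$,
$$\ing[x,y,a,b]+\ing[x,y,c,d]+\varepsilon_b\ge -\big(1+3i\big)\varepsilon_a-\delta_i.$$

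This is too lossy: the target coefficient on $\varepsilon_a$ is $k-1$. Choosing $i$ by pigeonhole (some $\delta_i\le \I(x,y)/k$) would only give a coefficient of order $3k$. The main work, and the main obstacle, is to lower the coefficient of $\H(w\|x)+\H(w\|y)$ from $3$ to $1$ and to spend the one surplus $\varepsilon_a$.

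The first thing to try is to look for a sharper Shannon inequality on $xyabcdw$ of the form
$$\ing[x,y,a,b]+\ing[x,y,c,d]\ge -\I(xy,a\|w)-\big(\H(w\|x)+\H(w\|y)\big)-\varepsilon_b+\varepsilon_a-\I(ab,cd\|xy)\cdot C,$$
or alternatively to merge the two steps into a single inequality involving $w_i,w_{i+1}$:
$$\ing[x,y,a,b]+\ing[x,y,c,d]+\varepsilon_b\ge -\delta_i-(\H(w_i\|x)+\H(w_i\|y))-\text{const}\cdot\varepsilon_a.$$
These are to be checked with an LP prover. If this succeeds, with $\H(w_i\|x)+\H(w_i\|y)\le i\varepsilon_a$, then choosing the pigeonhole index $i\le k-1$ with $\delta_i\le\I(x,y)/k$ yields exactly $(k-1)\varepsilon_a+(1/k)\I(x,y)$.

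If the pigeonhole bound does not match exactly, we would instead average the per-$i$ inequalities over $i=0,\dots,k-1$ with weight $1/k$. The $\delta_i$ then telescope to at most $\I(x,y)/k$, and the $\varepsilon_a$-coefficient becomes an average of $ci+c'$, which we would tune to $k-1$. This averaging route is preferable because it produces a genuine linear inequality directly, without case analysis.

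In either route, the remaining step is purely computational: identifying the right 7- or 8-variable Shannon inequality with the correct coefficients is the crux.
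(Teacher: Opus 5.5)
\paragraph{Assessment.} Your setup is sound: Lemma~\ref{lemma:maxe} gives $\I(ab,cd\|xy)=0$, then Construction~\ref{lemma:Dw} supplies the $w_i$, and the $\delta_i$ telescope. But the proposal stops at the only step with mathematical content. You never exhibit the per-index Shannon inequality, and the explicit candidate you write down is false.

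\paragraph{Counterexample to your first candidate.} Take $a=w=x$, $b$ constant, and let $c,d$ be independent uniform bits with $x=c\wedge d$, $y=c\vee d$. Then $\I(ab,cd\|xy)=0$ and $\H(w\|xy)=0$. The right-hand side of your first candidate is $-\H(x\|y)+\varepsilon_a=0$, since $\varepsilon_a=\H(x\|y)$. The left-hand side is $\ing[x,y,x,b]+\ing[x,y,c,d]=0-\I(x,y)<0$.

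\paragraph{Bookkeeping.} Even if an inequality of that shape held, substituting $\I(xy,a\|w_i)\le\varepsilon_a+\H(w_i\|x)+\H(w_i\|y)+\delta_i$ leaves coefficient $2$ on $\H(w_i\|x)+\H(w_i\|y)$. The pigeonhole index then gives $2(k-1)\varepsilon_a$, not $(k-1)\varepsilon_a$. With coefficient $2$, only uniform averaging gives the exact constant, because $\frac1k\sum_{i<k}2i=k-1$.

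\paragraph{The missing inequality.} Write $\mathcal S=\ing[x,y,a,b]+\ing[x,y,c,d]+\varepsilon_b$. What your route needs is
\[
\mathcal S+\I(x,y\|w)+2\big(\H(w\|x)+\H(w\|y)\big)\ge\I(x,y\|aw)
\qquad\text{whenever}\quad \H(w\|xy)=0,\ \I(ab,cd\|xy)=0 .
\]
It is Shannon, by the following chain.
\begin{itemize}
\item Under $\I(b,cd\|xy)=0$ one has $\I(xy,b)\le 2\varepsilon_b+\I(x,y\|c)+\I(x,y\|d)+\I(c,d)$.
\item Combined with $\I(x,y)=\I(x,y\|b)+\I(xy,b)-\varepsilon_b$, this gives $\ing[x,y,c,d]+\I(x,y\|b)+\varepsilon_b\ge0$, hence $\mathcal S+\I(x,y)\ge\I(x,y\|a)$.
\item Condition this on $w$. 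This is legitimate because $\H(w\|xy)=0$ preserves the independence.
\item Add $\mathcal S\|w\le\mathcal S+2(\H(w\|x)+\H(w\|y))$, where $\mathcal S\|w$ means every term of $\mathcal S$ conditioned on $w$. This holds for any $w$ with $\H(w\|xy)=0$.
\end{itemize}
Now set $w=w_i$, use point b) of Construction~\ref{lemma:Dw} and $\H(w_i\|x)+\H(w_i\|y)\le i\varepsilon_a$, and average over $i<k$. This yields exactly (\ref{eq:new}). Completed this way, your route is a legitimate alternative proof, but as submitted it is not a proof.

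\paragraph{Comparison with the paper.} The displayed inequality is the paper's first Shannon inequality with $v=w$: then $\mathcal V=\H(w\|x)+\H(w\|y)$ and the correction terms vanish. The paper does not use the Ahlswede--K\"orner reduction at all. It proves $k(\mathcal S+2\mathcal V)+\I(x,y\|v)+k(k-1)\mathcal A\ge0$ by induction on $k$. The hypothesis is applied to $xv,yv,av,bv,cv,dv$ with $av$ as the new $v$, so that $\mathcal A^*=\mathcal V^*$. Unrolled, this is the copy chain of Construction~\ref{defz}, conditioning on $z_i$ directly rather than on a reduced $w_i$.
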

\begin{proof}
Let us introduce the shorthands
\begin{align*}
  \mathcal S &= \ing[x,y,a,b]+\ing[x,y,c,d]+\I(x,b\|y)+\I(y,b\|x), \\
  \mathcal A &= \I(x,a\|y)+\I(y,a\|x), \\
  \mathcal V &= \I(x,v\|y)+\I(y,v\|x).
\end{align*}
We are going to prove the following inequality by induction on $k$:
$$
  k\,(\mathcal S + 2\mathcal V) + \I(x,y\|v)+ k(k-1)\mathcal A  \ge 0.
$$
It clearly proves (\ref{eq:new}) by setting $v$ independent of $xy$.
Using Lemma \ref{lemma:maxe} from Section \ref{sec:defs},
we may assume both $\I(ab,cd\|xy)=0$ and $\I(v,abcd\|xy)=0$. Apply the induction
hypothesis to $xv,yv,av,bv,cv,dv$  and $av$ in place of $x,y,\allowbreak
a,b,\allowbreak c,d$, and $v$.
Denote the corresponding values by $\mathcal S^*$, $\mathcal A^*$ and $\mathcal V^*$.
Observe that $\mathcal A^*=\mathcal V^*$ as both $a$ and $v$ are replaced by
$av$. The induction hypothesis provides
$$
    k\,\mathcal S^*+\I(xv,yv\|av) + k(k+1)\mathcal A^* \ge 0.
$$
The following are three Shannon inequalities:
\begin{align*}
   \mathcal S+2\mathcal V+\I(x,y\|v) & \ge \I(xv,yv\|av) -{} \\
       & ~~~~~~-5\,\I(v,abcd\|xy)-3\,\I(b,cd\|xy), \\
   \mathcal S+2\mathcal V  & \ge \mathcal S^*
      -5\,\I(v,abcd\|xy)-3\,\I(b,cd\|xy),  \\
   \mathcal A &\ge \mathcal A^* - 2\,\I(v,a\|xy).
\end{align*}
By the assumptions above, the negative terms on the right-hand sides are all zero.
Multiplying these inequalities by $1$, $k$, $k(k+1)$ times, respectively, and
adding them up, we get
\begin{align*}
  & (k+1)(\mathcal S+2\mathcal V)+\I(x,y\|v)+k(k+1)\mathcal A \ge {} \\
  & ~~~ k\mathcal S^*+\I(xv,yv\|av)+k(k+1)\mathcal A^* \ge 0,
\end{align*}
which completes the induction step.
\end{proof}


\section{Conclusion}\label{sec:conc}

In this paper, we have revisited the stability of the Ingleton inequality,
initiated in \cite{rom3}, using a traditional, tropicalization-free
framework. By completely bypassing the machinery of tropical probability
spaces, we have succeeded in streamlining the underlying concepts and
proofs, improving some of the existing estimates, and providing explicit,
non-asymptotic error terms. Additionally, we resolved an open problem posed
in \cite{rom3} by proving a new infinite family of entropy inequalities that
establishes the stability of the sum of two Ingleton expressions.

A promising avenue for future research concerns the precision of these
stability bounds. In Theorem \ref{thm:F}, the estimate depends on a square
root error term (on the order of $O(\sqrt{\epsilon})$ in terms of the
conditional mutual information $\I(x,a\|y) + \I(y,a\|x)$). In contrast,
Theorem \ref{thm:C} offers a significantly stronger, almost linear error
bound of the form $O(\epsilon \log(1/\epsilon))$.

To study whether the square root error term in Theorem \ref{thm:F} can be
tightened to an almost linear one, the following candidate family of entropy
inequalities has been verified numerically for $0\le k \le 9$:
\begin{equation}\label{eq:open-problem}
  (2^k\m-1)\mathcal S + \I(x,y\|a) +
      (k2^k\m-k)\mathcal A + k2^k\mathcal B \ge 0,
\end{equation}
where $\mathcal S = \ing[x,y,a,b] + \ing[x,y,c,d] + \mathcal A- \mathcal B$,
$\mathcal A = \I(x,a\|y) + \I(y,a\|x)$, and $\mathcal B =
\I(x,b\|y) + \I(y,b\|x)$. If inequality (\ref{eq:open-problem}) holds for
all $k \ge 1$, it would formally prove that the square root error term in
Theorem \ref{thm:F} can indeed be replaced by an almost linear one,
mirroring the relationship between Theorem \ref{thm:AB} and Theorem
\ref{thm:C}. We leave as open problems both the formal inductive proof of
(\ref{eq:open-problem}) and determining whether the proof of Theorem
\ref{thm:F} can be modified using the $a\leftrightarrow b$ symmetry \`a la
Theorem \ref{thm:C} that achieves the almost linear error bound.


\section*{Acknowledgments}
The research reported in the paper was partially supported by the ERC
Advanced grant ERMiD.



\begin{thebibliography}{99}
\bibitem{AK}
   R.~Ahlswede, P.~G\'acs, J.~K\"orner (1976),
  Bounds on conditional probabilities with applications in multi-use
  communication.
  \emph{Zeitschrift f\"ur Wahrscheinlichkeitstheorie und Verwandte Gebiete}
   {\bf 34} 157--177.
%
\bibitem{DFZ11}
  R.~Dougherty, C.~Freiling, K.~Zeger (2011),
  {Non-Shannon information inequalities in four random variables}.
  \emph{arXiv} 1104.3602.
  \url{https://doi.org/10.48550/arXiv.1104.3602}
%
\bibitem{ingleton}
 Aubrey W.~Ingleton (1971), 
 Representation of matroids.
\emph{Combinatorial mathematics and its applications}, {\bf23}, pp 149--167.
%
\bibitem{Kaced} T.~Kaced (2013)
  Equivalence of two proof techniques for non-Shannon-type inequalities.
  \emph{Proceedings of the 2013 {IEEE} ISIT}, Istanbul, Turkey, July 7-12, 236--240.
%
\bibitem{MMRV} K.~Makarychev, Yu.~Makarychev, A.~Romashchenko,
  N.~Vereshchagin (2002),
  A new class of non-Shannon-type inequalities for entropies.
  \emph{Comm. in Inf. and Systems} {\bf 2}(2) pp 147--166.
%
\bibitem{Li23}
  C. T. Li (2023),
  An Automated Theorem Proving Framework for Infor\-ma\-tion{-}Theoretic Results,
  \emph{IEEE Trans. Inf. Theory}, {\bf69}(11), pp. 6857--6877
%
\bibitem{Ma.Stud}
  F.~Mat\'u\v{s}, M.~Studen\'y (1995)
  Conditional Independences among Four Random Variables I.
  \emph{Combinatorics, Probability and Computing}, {\bf 4}(3) pp 269--278.
  \url{https://doi.org/10.1017/S0963548300001644}
%
\bibitem{mat95}
   F.~Mat\'{u}\v{s} (1995),
   Conditional independences among four random variables II.
  \emph{Combinatorics, Probability and Computing},
  {\bf4}(4) pp 407--417.
%
\bibitem{mat99}
  F.~Mat\'{u}\v{s} (1999),
  Conditional independences among four random variables III:
  Final conclusion.
  \emph{Combinatorics, Probability and Computing}, {\bf8}(3) pp 269--276.
%
\bibitem{M.infinf}
   F.~Mat\'{u}\v{s} (2007),
   Infinitely many information inequalities.
   \emph{Proceedings IEEE ISIT 2007}, Nice, France, 41--44.
%
\bibitem{M.fmadhe}
  F.~Mat\'u\v{s} (2007),
  Adhesivity of polymatroids,
  \emph{Discrete Mathematics} {\bf 307} 2464--2477.
%
\bibitem{fmtwocon}
 F.~Mat{\'u}{\v{s}} (2007),
\newblock Two Constructions on Limits of Entropy Functions.
\newblock {\em IEEE Trans. Inf. Theor.} {\bf 2007}, {\em 53},~320--330.
\newblock {\url{https://doi.org/10.1109/TIT.2006.887090}}.
%
\bibitem{entreg}
  F.~Mat\'{u}\v{s}, L.~Csirmaz (2016)
   Entropy region and convolution.
   \emph{IEEE Trans.\ Inform.\ Theory} {\bf 62} 6007--6018.
%
\bibitem{MP18}
 R.~Matveev, J.~W.~Portegies (2018),
 Asymptotic dependency structure
of multiple signals: Asymptotic equipartition property for diagrams of probability
spaces.
 \emph{Information Geometry}, {\bf1}(2) pp 237--285
%
\bibitem{MP20}
 R.~Matveev, J.~W.~Portegies (2020),
 Tropical probability theory and an application to the entropic cone. 
 \emph{Kybernetika}, {\bf56}(6) pp 1133--1153.
 \url{http://doi.org/10.14736/kyb-2020-6-1133}
%
\bibitem{MP20a}
  R.~Matveev, J.~W.~Portegies (2020),
  Tropical diagrams of probability spaces.
  \emph{Information Geometry}, {\bf3}1 pp. 61--88.
  \url{https://doi.org/10.1007/s41884-020-00027-1}
%
\bibitem{rom3}
 R.~Matveev, A.~Romashchenko (2026),
  Structural properties of entropic vectors and stability of the Ingleton
inequality.
  \emph{arXiv} 2512.02767.
  \url{https://doi.org/10.48550/arXiv.2512.02767}
%
\bibitem{oxley}
  J. Oxley (2011),
  \emph{Matroid Theory} (second edition), 
  Oxford University Press, New York (2011)
%
\bibitem{ITIP}
  N.~Rethnakar, S.~Diggavi, T.~Gläßle, E.~Perron, R.~Pulikkoonattu, 
  R.W.~Yeung, Y.~Yan (2020), 
  \emph{Online X Information Theoretic Inequalities Prover} oXitip. 
  Available at \url{https://www.oxitip.com}
%
\bibitem{studeny21}
M.~Studen\'y (2021), 
  Conditional independence structures over four discrete random variables
  revisited: conditional Ingleton inequalities.
 \emph{IEEE Transactions on Information Theory}, {\bf67}(11), 7030--7049.
 \url{https://doi.org/10.1109/TIT.2021.3104250}
%
\bibitem{W-auto}
 Shing Yin Wong, Shaocheng Liu, Linqi Song, Amin Gohari, Cheuk Ting Li
(2026),
   Automated Proving of Shannon-Type Entropy Inequalities via Fine-Tuned
   Language Models and Guided Tree Search.
   \emph{arXiv} 2606.05729.
   \url{https://doi.org/10.48550/arXiv.2606.05729}
%
\bibitem{yeung-book}
  R.~W.~Yeung (2002),
   \emph{A First Course in Information Theory}.
   Kluwer Academic/Plenum Publishers, New York.
%
\bibitem{ZhY.conic}
 Z.~Zhang, R.W.~Yeung (1997),
 A non-shannon-type conditional inequality of information quantities.
 \emph{IEEE Transactions on Information Theory}, {\bf 43}(6) 1982--1986.
%
\bibitem{ZhY.ineq}
  Z.~Zhang, R.W.~Yeung (1998),
   On characterization of entropy function via information inequalities.
   \emph{IEEE Trans.\ Inform.\ Theory} {\bf 44} 1440--1452.
%
\bibitem{ziegler}
 G.~M.~Ziegler (1994)
  \newblock{\em Lectures on polytopes}.
  \newblock Graduate Texts in Mathematics, 152 Springer.
%
\end{thebibliography}
\end{document}